\renewcommand{\thefootnote}{\fnsymbol{footnote}}
\newtheorem{lemma}{Lemma}
\newtheorem{sep}{Separability criterion}
\newtheorem{example}{Example}
\newtheorem*{state}{Statement}
\newtheorem*{unth}{Theorem}
\newtheoremstyle{dotless}{}{}{\itshape}{}{\bfseries}{}{ }{}
\theoremstyle{dotless}
\newtheorem*{dotless}{}
\newcommand{\hil}{\mathcal H}
\newcommand{\scr}{\scriptscriptstyle}
\newcommand{\lsc}[1]{_{\scr #1}}
\newcommand{\vm}[1]{V_{\mu #1}}
\newcommand{\vc}[1]{\mathrm{vec}(#1)}
\newcommand{\m}{\mathfrak M}
\newcommand{\hi}[1]{^{(#1)}}
\begin{document}
{\Large\bf Channel-state duality and the separability problem.}

\medskip
\begin{quotation}\noindent
  K.~V.~ Antipin,$^{(1)}$\footnote{kv.antipin@physics.msu.ru}
  \smallskip
  \label{b}\\
  $^{(1)}$\small{\em  Physics Department, Lomonosov Moscow State University, Moscow 119991, Russia.}

  \bigskip\medskip
  \noindent Separability of quantum states is analyzed with the use of the Choi-Jamiolkowski isomorphism. Spectral separability criteria are derived. The presented approach is illustrated with various examples, among which a separable decomposition of $2\otimes 2$ isotropic states is obtained.
\end{quotation}

\section{Introduction}
\renewcommand{\thefootnote}{\arabic{footnote}}

Entanglement as a resource~\cite{Preskill, Nielsen}  is a central notion in  quantum information theory. The important question is to tell whether a given quantum composite system state is entangled or separable. A bipartite mixed state is separable if it can be expressed as a convex combination of product states:
\begin{equation}
	\rho\lsc{AB} = \sum_i\, p_i\, \rho^i\lsc A\otimes\rho^i\lsc B,
\end{equation}
where $\rho^i\lsc A,\, \rho^i\lsc B$ -- local density matrices of subsystems $A$ and $B$ respectively; $p_i$ -- ensemble probabilities.

One of the first remarkable results in this direction was the positive partial transposition~(PPT) criterion~\cite{Peres} as necessary condition for separability of  bipartite mixed states. This simple but extremely useful observation by Asher Peres has generated further considerable research. It was proved that PPT condition is necessary and sufficient  for separability of $2\otimes2$ and $2\otimes3$ states~\cite{Horod1}. Over time several other necessary or/and sufficient criteria were developed~\cite{Horod1, Terhal_Bell, Wu, Horod2, Realign, CCN}, among which entanglement witnesses~\cite{Terhal_Bell, Horod2} and the CCNR criterion~\cite{Realign, CCN} proved to be important tools in detecting entanglement. The Bloch representation was first introduced into the separability problem by de Vicente~\cite{Vicente}, the idea, extended in a recent  paper~\cite{Nature_sep} where a necessary and sufficient separability criterion was obtained in terms of inequalities for singular values of the correlation matrix.

In this paper we present an approach to the separability problem inspired by the well-known correspondence between completely positive~(CP) maps and states, the Choi-Jamiolkowski isomorphism~\cite{Jam, Choi}. We make use of the fact that a state is separable if and only if the corresponding CP~map can have operator-sum representation with unit rank operators. Different operator-sum representations of the same CP map are connected by linear transformations of the specific type. Thus, the state is separable if and only if the operators\footnote{If a CP map is trace-preserving, they are called Kraus operators.} of the corresponding CP~map can all be transformed to those of unit rank. We analyze the conditions under which such transformations exist and show that this approach can be a powerful tool in investigating the separability problem.

The paper is organized as follows: in Section~\ref{sec2}\:\:\:, based on the properties of the Choi-Jamiolkowski isomorphism, we derive a necessary and sufficient condition for separability of bipartite mixed states; in Section~\ref{sec3}\:\:\: we consider the applications of this result: in Section~\ref{sec3}\:\:\:.\ref{sec3:sub1}\:\:\: spectral separability criteria are obtained,
in Section~\ref{sec3}\:\:\:.\ref{sec3:sub2}\:\:\: we illustrate our method with several examples including a separable decomposition of  $2\otimes 2$ isotropic states; in Section~\ref{sec4}\:\:\: the presented approach is reformulated in terms of the factorization of the density operator; finally, Section~\ref{sec5}\:\:\: contains conclusions and discussion of the open questions.

\section{\label{sec2}Channel-state duality and connections with separability}

In this section we derive a necessary and sufficient condition for separability of a bipartite mixed state.

We begin with recalling the properties of the Choi-Jamiolkowski isomorphism between states and CP~maps. Let $\rho_{\scr AB}$ be a density operator acting on Hilbert space $\hil_{AB} = \hil_{A}\otimes\hil_{B} $ with dimensions $dim(\hil_{A}) = m$, $dim(\hil_{B}) = n$, $m\le n$. Let $\sigma$ be a density operator on $\hil_{A}$. A~CP~map $\Lambda_{\rho}\colon \mathcal L(\hil_A)\to\mathcal L(\hil_B)$, corresponding to $\rho_{\scr AB}$, can be defined by the action on an arbitrary state $\sigma$ as follows~\cite{RudCJ}:
\begin{equation}
  \label{CJ}
  \Lambda_{\rho}[\sigma] = m\, \mathrm{Tr_{\scriptscriptstyle A}}\{\sigma^T\!\otimes\!I_{\scriptscriptstyle B}\; \rho_{\scr AB}\},
\end{equation}
where $\sigma^T$ -- transposed operator $\sigma$, $I_{\scriptscriptstyle B}$ -- identity operator acting on $\hil_B$.

The initial state $\rho$ is recovered by the action of the Choi operator~\cite{Wilde}:
\begin{equation}
  \label{ChoiOp}
  (I_{\scr A}\otimes \Lambda_{\rho})\ket{\Gamma}\!\bra{\Gamma}_{\scr AA} = \rho_{\scr AB},
\end{equation}
where $\ket{\Gamma}_{\scr AA}$ - the maximally entangled vector on $\hil\lsc A\otimes\hil\lsc A$:
\begin{equation}
  \label{MaxEntVec}
  \ket{\Gamma}_{\scr AA} = \frac 1{\sqrt m} \sum_{\scr i=0}^{\scr m-1} \ket{i}_{\scr A}\otimes\ket{i}_{\scr A}.
\end{equation}
Suppose that $\rho$ is realized by a specific ensemble of bipartite pure states $\ket{\Psi_a}$ with probabilities $p_a$:
\begin{equation}
  \label{StDec}
  \rho_{\scr AB} = \sum_a p_a \ket{\Psi_a}\!\bra{\Psi_a}_{\scr AB},
\end{equation}
then, from Eq.~(\ref{CJ}) and Eq.~(\ref{StDec}) it is straightforward to see that
\begin{equation}
  \label{OperEl}
  \Lambda_{\rho}[\:\ket{\varphi}\!\bra{\varphi}_{\scr A}] = m\, \sum_a p_a \,{}_{\scr A}\!\!\bra{\varphi^*}\ket{\Psi_a}\!\bra{\Psi_a}\ket{\varphi^*}\!\!_{\scr A},
\end{equation}
where we extract the action on $\dyad{\varphi_{\scr A}}$ using the dual vector ${}_{\scr A}\!\!\bra{\varphi^*}$ in accordance with the \emph{relative-state method}~\cite{PresChap3CJ}. Now, given a vector $\ket{\Psi_a}_{\scr AB}$, an operator $M_a$ mapping $\hil_A$ to $\hil_B$ can be defined by
\begin{equation}
  \label{OpElAct}
  M_a \ket{\varphi}_{\scr A} = \sqrt{m p_a}\: {}_{\scr A}\!\!\bra{\varphi^*}\ket{\Psi_a}_{\scr AB}.
\end{equation}
Eq.~(\ref{OperEl}) gives an operator-sum representation of the CP map $\Lambda_{\rho}$ acting on a pure state projector $\ket{\varphi}\!\bra{\varphi}_{\scr A}$ and hence, by linearity, on any density operator $\sigma$:
\begin{equation}
  \label{OpRep}
  \Lambda_{\rho}[\sigma] = \sum_a M_a\sigma M_a^{\dagger}.
\end{equation}

One important property of the operators $M_a$, which we will use in the present paper, is the connection with the reduced density matrices $\rho_{\scr A} = \mathrm Tr_{\scr B}\{ \rho_{\scr AB}\}$ and $\rho_{\scr B} = \mathrm Tr_{\scr A}\{ \rho_{\scr AB}\}$. Suppose that the pure states $\ket{\Psi_a}$ from the ensemble admit the following decomposition in the given orthonormal bases $\ket{i}_{\scr A}$, $\ket{j}_{\scr B}$  of $\hil_A$ and  $\hil_{\scr B}$:
\begin{equation}
  \label{PureDec}
  \ket{\Psi_a}_{\scr AB} = \sum_{i, j} c^{(a)}_{ij}\ket{i}_{\scr A}\otimes\ket{j}_{\scr B}, \:  0 \le i \le m-1,\, 0\le j \le n-1.
\end{equation}
Consider the matrix element $\mel{i}{M_a^{\dagger}M_a}{j}$: with the use of Eqs.~(\ref{OpElAct}), (\ref{PureDec}) it can be evaluated as follows:
\begin{equation}
  \label{MxOpEl}
  \begin{split}
   {}_{\scr A}\!\mel{i}{M_a^{\dagger}M_a}{j}_{\scr A} = m p_a\: {}_{\scr AB}\!\!\bra{\Psi_a}\!(\dyad{i}{j})_{\scr A}\!\ket{\Psi_a}\!\!_{\scr AB} = \\
   = \sum_{\scr q,r,s,t} m p_a\: c^{(a)*}_{qr} \ip{q}{i}_{\scr A}\bra{r}_{\scr B} \; c^{(a)}_{st} \ip{j}{s}_{\scr A}\ket{t}_{\scr B} = \\
   = \sum_t m p_a \, c^{(a)*}_{it} c^{(a)}_{jt} = m p_a (c^{(a)}c^{(a)\dagger})_{ji} = \\
   = m p_a (\rho^{(a)}_{\scr A})_{ji},
 \end{split}
\end{equation}
where $\rho^{(a)}_{\scr A} = c^{(a)}c^{(a)\dagger} = \mathrm{Tr}_{\scr B}\{\;\dyad{\Psi_a}_{\scr AB}\}$ -- reduction of $\dyad{\Psi_a}_{\scr AB}$ on subsystem $A$ such that $\rho_{\scr A} = \sum_a p_a \rho_{\scr A}^{(a)}$. We see that
\begin{equation}
  \label{OpDenCon}
  M_a^{\dagger}M_a = m p_a (\rho^{(a)}_{\scr A})^T,
\end{equation}
and the following property holds: \emph{a CP map is trace-preserving $(\sum_a M_a^{\dagger}M_a  = I)$ iff the reduced density operator $\rho\lsc{A}$ of the corresponding state is maximally mixed: $\rho\lsc A = \tfrac1m\: I\lsc A$}.

In a similar way, it can be obtained that
\begin{equation}
  \label{OpDenConB}
  M_aM^{\dagger}_a = m p_a \, \rho^{(a)}\lsc B.
\end{equation}
In addition, substituting $\tfrac1m\: I\lsc A $ for $\ket{\varphi}\!\bra{\varphi}_{\scr A}$ in Eq.~(\ref{OperEl}) gives:
\begin{equation}
  \label{Unital}
  \Lambda_{\rho}[\tfrac1m \,I\lsc A] = \mathrm{Tr}\lsc A \{\rho\lsc{AB}\} = \rho\lsc B,
\end{equation}
hence \emph{a CP map is unital iff the reduced density operator $\rho\lsc{B}$ of the corresponding state is maximally mixed: $\rho\lsc B = \tfrac1n\: I\lsc B$}.

The second important property is the transformations  of $M_a$~\cite{PresChap3CJ}. According to the Hughston-Jozsa-Wootters theorem~\cite{HJW}, two different ensemble realizations of the same density operator
\begin{equation}
  \label{TwoEns}
  \sum_a p_a \dyad{\Psi_a}\lsc{AB} = \sum_{\mu} q_{\mu} \dyad{\Phi_{\mu}}\lsc{AB}
\end{equation}
are related by
\begin{equation}
  \label{HJWTr}
  \sqrt{q_{\mu}} \ket{\Phi_{\mu}} = \sum_a \sqrt{p_a}\, V_{\mu a} \ket{\Psi_a},
\end{equation}
where $V_{\mu a}$ --- a matrix with orthonormal columns\footnote{It will be a unitary matrix if the numbers of terms in both ensembles are equal.}. Therefore, from Eqs.~(\ref{OpElAct}),~(\ref{HJWTr}) it follows that operators $N_{\mu}$ and $M_a$, corresponding to $\ket{\Phi_{\mu}}$ and $\ket{\Psi_a}$, are related by
\begin{equation}
  \label{OpTransf}
  N_{\mu} = \sum_a V_{\mu a}M_a.
\end{equation}
Eq.~(\ref{OpTransf}) plays the main role in the development of our paper. Now, to analyze the separability of quantum states, we need one more statement that can be deduced from Theorem~4 of Ref.~\cite{EntBre}~(a similar theorem can also be found in Ref.~\cite{WildEntBre}):
\begin{unth}  \label{thm:gen} Let $\Phi$ be a completely positive map.  The following are equivalent
\begin{itemize}
\item[A)] $\Phi$ has the Holevo form:
	\[
		\Phi(\rho) = \sum_k R_k \, \tr  F_k \rho
	,\] 
	where $\{R_k\}$ are density matrices and $\{F_k\}$ are positive semi-definite.
\item[B)] $\Phi$ is entanglement-breaking.
\item[C)] $(I \otimes \Phi)( \dyad{ \beta}  )$ is separable
	for $\ket{\beta}= d^{-1/2} \sum_j \ket{j} \otimes \ket{j}$,
a maximally entangled  state.
\item[D)] $\Phi$ can be written in operator sum
form using only rank one operators.
\item[E)] $\Upsilon  \circ  \Phi$ is completely positive for
all positivity preserving maps $\Upsilon$.
\item[F)] $\Phi \circ \Upsilon $ is completely positive for
all positivity preserving maps $\Upsilon$.
\end{itemize}
\end{unth}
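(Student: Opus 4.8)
The plan is to prove the chain A)$\Rightarrow$B)$\Rightarrow$C)$\Rightarrow$D)$\Rightarrow$A), which settles the equivalence of the first four conditions using only the channel--state dictionary assembled above, and then to attach E) and F) to this cycle by two separate duality arguments. I would begin with the cheap implications. The step A)$\Rightarrow$B) is a direct computation: for a Holevo-form map and an arbitrary bipartite input one finds $(I\otimes\Phi)(\rho_{AB})=\sum_k \mathrm{Tr}_{B}\!\big[(I\otimes F_k)\rho_{AB}\big]\otimes R_k$, and each coefficient operator equals $\mathrm{Tr}_{B}\big[(I\otimes\sqrt{F_k})\rho_{AB}(I\otimes\sqrt{F_k})\big]\ge 0$, so the output is a sum of product operators and $\Phi$ breaks entanglement. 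The step B)$\Rightarrow$C) is immediate, since C) is nothing but the instance of B) obtained by choosing the maximally entangled projector $\dyad{\beta}$ as input.

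The core of the argument is C)$\Rightarrow$D), and this is exactly where the construction of Section~\ref{sec2} is designed to be used. Separability of the Choi state $(I\otimes\Phi)(\dyad{\beta})$ means it admits a decomposition into pure product states $\ket{\alpha_k}\lsc A\otimes\ket{\gamma_k}\lsc B$. Feeding such a product vector through the correspondence of Eq.~(\ref{OpElAct}) gives an operator acting as $M_k\ket{\varphi}\lsc A\propto\braket{\varphi^*}{\alpha_k}\,\ket{\gamma_k}\lsc B$, whose range is the single ray spanned by $\ket{\gamma_k}$; hence every $M_k$ has rank one and we obtain the rank-one operator-sum representation required by D). (Equivalently, the reduction of a product vector onto $A$ is rank one, so Eq.~(\ref{OpDenCon}) forces $\mathrm{rank}\,M_k=1$.) For D)$\Rightarrow$A) I would write each rank-one Kraus operator as $M_k=\dyad{u_k}{v_k}$, whence $M_k\rho M_k^{\dagger}=\mel{v_k}{\rho}{v_k}\,\dyad{u_k}$; after normalization this is the Holevo form with density matrices $R_k\propto\dyad{u_k}$ and positive operators $F_k\propto\dyad{v_k}$. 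This closes the cycle and yields A)$\Leftrightarrow$B)$\Leftrightarrow$C)$\Leftrightarrow$D).

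To bring in E) and F) I would use two facts. First, a map $\Psi$ is completely positive iff its Choi operator $(I\otimes\Psi)(\dyad{\beta})$ is positive semidefinite, and $(I\otimes(\Upsilon\circ\Phi))(\dyad{\beta})=(I\otimes\Upsilon)\big((I\otimes\Phi)(\dyad{\beta})\big)$. Thus E) asserts precisely that the Choi state of C) stays positive under $I\otimes\Upsilon$ for every positivity-preserving $\Upsilon$, which by the Horodecki positive-maps criterion---a bipartite state is separable iff $(I\otimes\Upsilon)$ keeps it positive for all positive $\Upsilon$---is equivalent to separability of that state, i.e. to C). Hence E)$\Leftrightarrow$C). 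Second, for F) I would pass to Hilbert--Schmidt adjoints: $\Phi\circ\Upsilon$ is CP iff $\Upsilon^{\dagger}\circ\Phi^{\dagger}$ is CP, and as $\Upsilon$ ranges over all positive maps so does $\Upsilon^{\dagger}$, so F) for $\Phi$ is exactly E) for $\Phi^{\dagger}$. Since a rank-one operator-sum representation $\{M_k\}$ of $\Phi$ gives the rank-one representation $\{M_k^{\dagger}\}$ of $\Phi^{\dagger}$, the entanglement-breaking property is self-dual, and F)$\Leftrightarrow$B) follows by applying the equivalences already proved to $\Phi^{\dagger}$.

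The step I expect to be the genuine obstacle is the equivalence of E) (and through it F)) with the remaining conditions, because it rests on the Horodecki duality between separable states and positive maps, whose proof is not elementary: it relies on a Hahn--Banach separation argument---equivalently, on the existence of an entanglement witness for every entangled state---rather than on the bookkeeping of channel--state duality. If that theorem is granted, the remaining work is mechanical, and within it the only point demanding care is C)$\Rightarrow$D), where one must verify that a product-state decomposition of the Choi state really does force each associated operator $M_k$ to be rank one.
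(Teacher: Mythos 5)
Your proof is correct, but note that the paper does not prove this theorem at all: it is quoted verbatim from Theorem~4 of the Horodecki--Shor--Ruskai paper on entanglement-breaking channels (Ref.~\cite{EntBre}) and used as an imported black box, with only the equivalence $(C)\Leftrightarrow(D)$ actually invoked (in the proof of Lemma~\ref{main_lem}). Your cycle $A\Rightarrow B\Rightarrow C\Rightarrow D\Rightarrow A$ and the two duality arguments for $E$ and $F$ reproduce, in essence, the standard proof from that reference; the computations in $A\Rightarrow B$ and $D\Rightarrow A$ check out, $C\Rightarrow D$ is exactly the channel--state bookkeeping of Section~\ref{sec2}\:\:\:\ (a product vector in the ensemble yields an operator with one-dimensional range via Eq.~(\ref{OpElAct})), and you correctly isolate the one genuinely non-elementary ingredient, namely the Horodecki positive-maps characterization of separability needed for $E\Leftrightarrow C$, together with the adjoint trick reducing $F$ to $E$. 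No gaps.
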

\medskip

Based on this result, we can prove the following
\begin{lemma}\label{main_lem}
	A bipartite mixed state $\rho$ is separable if and only if the operators $\{M_a\}$ from the operator-sum representation in Eq.~(\ref{OpElAct}) of the corresponding completely positive map in Eq.~(\ref{CJ}) can all be transformed by means of Eq.~(\ref{OpTransf}) to rank one operators $\{N_{\mu}\}$.
\end{lemma}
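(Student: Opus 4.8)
The plan is to reduce the statement to the equivalence of items C) and D) of the Theorem quoted above, and then to account for the residual freedom in the operator-sum representation by means of the Hughston--Jozsa--Wootters theorem. First I would observe that Eq.~(\ref{ChoiOp}) exhibits $\rho_{\scr AB}$ as the image of the maximally entangled projector $\dyad{\Gamma}_{\scr AA}$ under $I_{\scr A}\otimes\Lambda_\rho$, which is exactly the Choi state entering item C) with $d=m$. Hence the equivalence C)$\,\Leftrightarrow\,$D) already tells us that $\rho$ is separable if and only if $\Lambda_\rho$ admits \emph{some} operator-sum representation built from rank-one operators. The real content of the Lemma is therefore the sharper claim that such a rank-one representation, whenever it exists, can be reached from the distinguished representation $\{M_a\}$ of Eq.~(\ref{OpRep}) through a transformation of the form in Eq.~(\ref{OpTransf}).

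For the direction that assumes the transformation exists, suppose $\{M_a\}$ is carried by Eq.~(\ref{OpTransf}) into rank-one operators $\{N_\mu\}$. I would first verify that these still reproduce $\Lambda_\rho$, i.e. $\sum_\mu N_\mu \sigma N_\mu^\dagger = \Lambda_\rho[\sigma]$; this is immediate from the orthonormality of the columns of $V$, which gives $\sum_\mu V_{\mu a}V_{\mu b}^* = \delta_{ab}$ and collapses the double sum back to $\sum_a M_a\sigma M_a^\dagger$. Thus $\{N_\mu\}$ is a rank-one operator-sum representation of $\Lambda_\rho$, and item D)$\,\Rightarrow\,$C) yields separability of $\rho$.

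For the converse, suppose $\rho$ is separable. Then item C)$\,\Rightarrow\,$D) furnishes some rank-one operator-sum representation $\{N_\mu\}$ of $\Lambda_\rho$. Feeding this representation into the recovery formula Eq.~(\ref{ChoiOp}) produces a pure-state ensemble $\{\sqrt{q_\mu}\ket{\Phi_\mu}\}$ of the \emph{same} $\rho$, namely $\sqrt{q_\mu}\ket{\Phi_\mu}=(I_{\scr A}\otimes N_\mu)\ket{\Gamma}_{\scr AA}$, while $\{M_a\}$ corresponds through Eq.~(\ref{OpElAct}) to the ensemble $\{\sqrt{p_a}\ket{\Psi_a}\}$. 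Since both ensembles realize $\rho$ as in Eq.~(\ref{TwoEns}), the Hughston--Jozsa--Wootters relation Eq.~(\ref{HJWTr}) links them through a matrix $V_{\mu a}$ with orthonormal columns, and inserting this into Eq.~(\ref{OpElAct}) reproduces precisely Eq.~(\ref{OpTransf}); hence $\{M_a\}$ is transformed into the rank-one family $\{N_\mu\}$, as claimed.

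The step I expect to be the main obstacle is making rigorous the dictionary between operators and ensemble vectors, and in particular the assertion that \emph{every} operator-sum representation of $\Lambda_\rho$ is captured by Eq.~(\ref{OpTransf}). The cleanest way to secure this is to read off from the computation in Eq.~(\ref{MxOpEl}) that the matrix of $M_a$ coincides, up to the scalar $\sqrt{m p_a}$ and a transposition, with the coefficient matrix $c^{(a)}$ of $\ket{\Psi_a}$, so that $\operatorname{rank} M_a$ equals the Schmidt rank of $\ket{\Psi_a}$. Consequently rank-one operators correspond exactly to product vectors, and ``rank-one representation'' becomes synonymous with ``product-state ensemble'', which is the definition of separability. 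The only remaining delicacy is bookkeeping: the two ensembles in Eq.~(\ref{TwoEns}) may have different numbers of terms, so one pads the shorter family with zero vectors (equivalently, zero operators) to turn $V_{\mu a}$ into a genuine matrix with orthonormal columns, after which the correspondence with Eq.~(\ref{OpTransf}) is exact in both directions.
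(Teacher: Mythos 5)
Your proof is correct and follows essentially the same route as the paper: both reduce the claim to the equivalence of items C) and D) of the quoted theorem, with Eq.~(\ref{ChoiOp}) identifying $\rho$ as the Choi state of $\Lambda_{\rho}$. You additionally spell out the Hughston--Jozsa--Wootters bookkeeping showing that any rank-one operator-sum representation is reachable from $\{M_a\}$ via Eq.~(\ref{OpTransf}) --- a step the paper's proof asserts without elaboration --- so your version is, if anything, more complete.
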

\begin{proof}
  If $\rho$ is separable, then, according to Eq.~(\ref{ChoiOp}), the corresponding map $\Lambda_{\rho}$ transforms the maximally entangled state to a separable state, and, by the proposition~$(C)\Rightarrow(D)$ of Theorem~4 from Ref.~\cite{EntBre}, $\Lambda_{\rho}$ can be written in operator sum using only operators of rank one, i.~e., Eq.~(\ref{OpTransf}) holds with  $\{N_{\mu}\}$ of rank one. The converse statement is also true due to the equivalence of the clauses~(C) and (D) of the above mentioned Theorem.
\end{proof}

As an indirect application of Lemma~\ref{main_lem}, we consider a couple of separability criteria based on spectral properties.

\section{Applications of Lemma~\ref{main_lem}}\label{sec3}

\subsection{\label{sec3:sub1}Spectral separability criteria and some inequalities}

For now, let Eq.~(\ref{StDec}) express a spectral decomposition of the density operator $\rho$. The matrices $c^{(a)}$ of Eq.~(\ref{PureDec}), which correspond to the eigenvectors $\ket{\Psi_a}$, are orthonormal with respect to the Hilbert-Schmidt inner product:
\begin{equation}
  \label{MatHilProd}
  \mathrm{Tr} \{c^{(a) \dagger}c^{(b)}\} = \delta_{ab}.
\end{equation}
From Eqs.~(\ref{OpElAct}), (\ref{PureDec}), (\ref{MatHilProd}) it follows, then, that operators $\{M_a\}$ are mutually orthogonal:
\begin{equation}
  \label{OpOrt}
  \mathrm{Tr} \{M_a^{\dagger} M_b\} = m p_a \delta_{ab}.
\end{equation}
Now, consider Eq.~(\ref{OpTransf}). We can take advantage of the spectral decomposition and express the coefficients $V_{\mu a}$ using Eq.~(\ref{OpOrt}):
\begin{equation}
  \label{UnCoef}
  V_{\mu a} = \frac{\mathrm{Tr}\{M_a^{\dagger}N_{\mu}\}}{m p_a}.
\end{equation}
 Being the entries of a matrix with orthonormal columns, $V_{\mu a}$ satisfy the normalization condition:
\begin{equation}
  \label{ColAbs}
\sum_{\mu}  \lvert V_{\mu a}\rvert ^2 = 1.
\end{equation}
At the same time, we can give an upper bound on $\lvert\mathrm{Tr}\{M_a^{\dagger}N_{\mu}\}\rvert$ using the Cauchy-Schwarz inequality or an even sharper bound using the following property~\cite{MA,Bhatia}:
\begin{equation}
  \label{TrSing}
  \lvert\mathrm{Tr}\{A^{\dagger}B\}\rvert \leqslant \sum_{i=1}^{q} \sigma_i (A) \sigma_i (B),
\end{equation}
where $A, B$ -- complex $m\times n$ matrices, $q = \mathrm{min}\{m,n\}$, $\sigma_i(A), \,\sigma_i(B)$ -- singular values of A and B arranged in non-increasing order: $\sigma_1(A)\geqslant\sigma_2(A)\geqslant\ldots\geqslant\sigma_q(A)$.

Let $\lambda^{(a)}_1, \ldots, \lambda^{(a)}_m$ denote the eigenvalues of the density operator $\rho^{(a)}\lsc A$ on subsystem A, taken in non-increasing order. From Eq.~(\ref{OpDenCon}) we can see the connection between the singular values $\sigma_i(M_a)$ of the operators $M_a$~(thought of as matrices) and the eigenvalues $\lambda^{(a)}_i$:
\begin{equation}
  \label{ValConn}
  \sigma_i(M_a) = \sqrt{m p_a \lambda^{(a)}_i}.
\end{equation}
Eqs.~(\ref{UnCoef}), (\ref{TrSing}), (\ref{ValConn}) give an upper bound on $\lvert V_{\mu a}\rvert$:
\begin{equation}
  \label{BoundV}
  \lvert V_{\mu a}\rvert\leqslant \sum_{i = 1}^{m}\sqrt{\frac{q_{\mu}\tilde\lambda^{(\mu)}_i \lambda^{(a)}_i}{p_a}},
\end{equation}
where $q_{\mu}$ -- probabilities of the second ensemble realization of the density operator $\rho$, as in Eq.~(\ref{TwoEns}); $\tilde{\lambda}^{(\mu)}_i$ -- eigenvalues of $\tilde \rho^{(\mu)}\lsc A = \mathrm{Tr}\lsc B \{\;\dyad{\Phi_{\mu}}\lsc{AB}\}$.

If $\rho$ is separable, then, according to Lemma~\ref{main_lem}, there exist coefficients $V_{\mu a}$ such that $\{N_{\mu}\}$ are rank one operators. Correspondingly, $ \tilde \rho^{(\mu)}\lsc A$ are also of rank one, and so each of them has only one non-vanishing eigenvalue:
\begin{equation}
  \label{EigenVan}
\tilde\lambda^{(\mu)}_1 = 1, \, \tilde\lambda^{(\mu)}_2 = \ldots = \tilde\lambda^{(\mu)}_m = 0.
\end{equation}
Combining this fact with Eqs.~(\ref{ColAbs}), (\ref{BoundV}), we obtain:
\begin{equation}
  \label{SpecCrit}
  1 = \sum_{\mu}  \lvert V_{\mu a}\rvert ^2 \leqslant \sum_{\mu} \frac{q_{\mu} \lambda^{(a)}_1}{p_a} = \frac{\lambda^{(a)}_1}{p_a}\sum_{\mu} q_{\mu} = \frac{\lambda^{(a)}_1}{p_a}.
\end{equation}
As a result, we have the following
\begin{sep}\label{sep1}
If a density operator $\rho\lsc{AB}$ with a given spectral decomposition \[
\rho\lsc{AB} = \sum_a p_a \dyad{\Psi_a}\lsc{AB}
\]  is separable, then the largest eigenvalue of each operator  $\rho^{(a)}\lsc{A} = \mathrm{Tr}\lsc B \{\;\,\dyad{\Psi_a}\lsc{AB}\}$ is greater than or equal to the corresponding ensemble probability: 
\begin{equation}\label{SepCrit1}
\lambda^{(a)}_1 \geqslant\, p_a
\end{equation}
\end{sep}

\bigskip\noindent
We note that the criterion was derived independently of Ref.~\cite{SpecCrit}, where the same property was obtained by a different method within the framework of the theory of entanglement witnesses.

As an example of application of the criterion, consider isotropic states~\cite{ISO} in arbitrary dimension $d$~($m = n = d$):
\begin{equation}
  \label{Iso}
  \rho^{iso}(\alpha) = \alpha \dyad*{\Phi^+} + \frac{1-\alpha}{d^2}\,I_d\otimes I_d,
\end{equation}
where $0\leqslant\alpha\leqslant 1,$ and $\ket{\Phi^+}$ -- maximally entangled state:
\begin{equation}
  \label{MxEnt2}
  \ket{\Phi^+} = \frac1{\sqrt d}\sum_{i=0}^{d-1} \ket{i}\lsc A \otimes\ket{i}\lsc B.
\end{equation}
In order to obtain a spectral decomposition of $\rho^{iso}$, we introduce mutually orthogonal auxiliary states
\begin{equation}
  \label{Symanti}
  \begin{split}
  &\ket{\Phi^+_k} = \frac1{\sqrt d}\sum_{j = 0}^{d-1} e^{\scriptscriptstyle i 2\pi jk/d} \ket{j}\lsc A \otimes\ket{j}\lsc B, \: k = 0, \ldots, d-1, \\
  \ket{\Psi^{\pm}_{ij}} &= \frac1{\sqrt 2}(\;\ket{i}\lsc A\otimes\ket{j}\lsc B \pm \ket{j}\lsc A\otimes\ket{i}\lsc B), \, i < j, \: i,j = 0, \ldots, d-1,
  \end{split}
\end{equation}
such that $\ket{\Phi^+}\equiv\ket{\Phi^+_0}$; $d$ states $\ket{\Phi^+_k}$ and $d(d-1)/2$ states $\ket{\Psi^+_{ij}}$ belong to the symmetric subspace of $\hil\lsc{AB}$, and $d(d-1)/2$ states $\ket{\Psi^-_{ij}}$ belong to the antisymmetric subspace. Therefore, the identity operator $I_d\otimes I_d$ can be decomposed in terms of $d^2$ mutually orthogonal states from Eq.~(\ref{Symanti}):
\begin{equation}
  \label{IdDec}
  I_d\otimes I_d = \sum_{k=0}^{d-1} \dyad{\Phi^+_k} + \sum_{\substack{i<j\\i, j = 0}}^{d-1}\left[\dyad*{\Psi^+_{ij}} + \dyad*{\Psi^-_{ij}}\right],
\end{equation}
and the isotropic state density operator is expressed as follows:
\begin{equation}
  \label{IsoF}
  \begin{split}
    \rho^{iso}(F) = F \dyad*{\Phi^+_0} + &\frac{1-F}{d^2-1}\left(\sum_{k=1}^{d-1}\dyad*{\Phi^+_k}  + \sum_{\substack{i<j\\i, j = 0}}^{d-1} \left[\dyad*{\Psi^+_{ij}} + \dyad*{\Psi^-_{ij}}\right]\right),\\
    &F = \frac{\alpha (d^2 - 1) + 1}{d^2}.
  \end{split}
\end{equation}
Applying separability criterion~\ref{sep1} to the first term of the decomposition in Eq.~(\ref{IsoF}), we obtain that if $\rho^{iso}$ is separable, then
\begin{equation}
  \label{SepIso}
  \lambda_1(\dyad*{\Phi^+_0}) = \frac1{d} \geqslant F.
\end{equation}
Conversely, if $F > 1/d$, i.e., $\alpha > 1/(d+1)$, then $\rho^{iso}$ is entangled. This fact was established by application of other separability criteria~\cite{ISO}.

Although separability criterion~\ref{sep1} is able to detect all entangled isotropic states, it is not as much efficient in many other cases where it can be applicable. As an example, consider a $2\otimes2$ state
\begin{equation}
  \label{mpmix}
  \rho^{\pm} = p\dyad*{\psi^{\pm}} + (1-p)\dyad{00},
\end{equation}
where $\ket{\Psi^{\pm}} = \tfrac1{\sqrt 2}(\ket{01}\pm\ket{10})$. In this case the criterion detects entanglement only when $p > 1/2$, whereas $\rho^{\pm}$ is entangled at each $p\ne 0 $.

As a consequence of separability criterion~\ref{sep1}, the following inequality holds for a separable state:
\begin{equation}\label{sepcons}
\sum_a \lambda_1^{(a)} \geqslant 1.
\end{equation}
This inequality can also be obtained from the majorization criterion~\cite{NK}, which states that the density matrix of a separable state is majorized by both of its reductions:
\begin{equation}\label{maj}
\rho\lsc{AB} \prec\rho\lsc A,\, \rho\lsc{AB} \prec\rho\lsc B,
\end{equation}
where majorization $A\prec B$ for arbitrary Hermitian operators $A$ and  $B$ is defined in terms of their eigenvalues $\lambda_1 (A)\geqslant\ldots\geqslant\lambda_d (A)$ and $\lambda_1 (B)\geqslant\ldots\geqslant\lambda_d (B)$:
\begin{equation}\label{majn}
	\sum_{j=1}^k \lambda_j (A) \leqslant\sum_{j=1}^k \lambda_j(B),
\end{equation}
for $k = 1,\ldots, d-1$, and with the inequality holding with equality when  $k=d$. Applying Eq.~(\ref{maj}) and Eq.~(\ref{majn})~(with $k=1$) to the spectral decomposition of $\rho\lsc{AB}$, in which $\lambda_j (\rho\lsc{AB})=p_j$, we obtain:
\begin{equation}\label{KFanC}
p_1\leqslant\lambda_1 (\rho\lsc A) = \lambda_1 \left(\sum_a p_a \rho^{(a)}\lsc A\right)\leqslant\sum_a p_a\lambda_1^{(a)}.
\end{equation}
Here the last less or equal sign is due to the Ky Fan's maximum principle~\cite{BhatFan} in application to a sum of operators. Combination of Eq.~(\ref{KFanC}) with the fact that $p_a\leqslant p_1$ for all $a$ yields inequality in Eq.~(\ref{sepcons}).

A tighter lower bound for $\lambda^{(a)}_1$ can be expressed in terms of the coefficients $V_{\mu a}$ that transform the initial operators $\{M_a\}$ to the ones of unit rank. Here the inequality for a perturbation bound on all singular values~\cite{HornFrob} can be used:
\begin{equation}\label{Frob}
	\left[\sum_{i=1}^q\, [\sigma_i (A) - \sigma_i (B)]^2\right]^{\tfrac12}\,\leqslant\, \norm{A-B}_2,
\end{equation}
where $A$ and $B$ -- arbitrary $m\times n$ matrices, $q=\min\{m,\,n\}$, $\norm{A}_2 = [\mathrm{Tr}\{A^{\dagger}A\}]^{1/2}$ is the Frobenius norm. To apply the inequality, we take a unit rank operator $N_{\mu}$ as $A$ and one of the terms in the right part of Eq.~(\ref{OpTransf}), $V_{\mu k}M_k$, as $B$. In the right part of the inequality we have a Frobenius norm of the operator $N_{\mu}^{(k)}\equiv N_{\mu} - V_{\mu k}M_k$, and the orthogonal condition in Eq.~(\ref{OpOrt}) can be used to calculate $\mathrm{Tr}\{N_{\mu}^{(k)\dagger}N_{\mu}^{(k)}\}$:
\begin{equation}\label{TrOrth}
	\mathrm{Tr}\{N_{\mu}^{(k)\dagger}N_{\mu}^{(k)}\} = \sum_{\substack{a\ne k\\ b\ne k }}\, V_{\mu b}^* V_{\mu a} \mathrm{Tr}\{M_b ^{\dagger}M_a\} = m\sum_{a\ne k}\,\absolutevalue{V_{\mu a}}^2 p_a.
\end{equation}
Eq.~(\ref{Frob}) in combination with Eqs.~(\ref{ValConn}),~(\ref{EigenVan}) then gives:
\begin{equation}\label{FrobOp}
	\left[\left(\sqrt{q_{\mu}} - \absolutevalue{V_{\mu k}}\sqrt{p_k\lambda_1^{(k)}}\right)^2 + \absolutevalue{V_{\mu k}}^2\,p_k\,(\lambda_2^{(k)} + \ldots + \lambda_m^{(k)})\right]^{\tfrac12}\,\leqslant\,\left(\sum_{a\ne k}\,\absolutevalue{V_{\mu a}}^2 p_a\right)^{\tfrac12}.
\end{equation}
Taking into account the equality $\lambda_2^{(k)} + \ldots + \lambda_m^{(k)} = 1 - \lambda_1^{(k)}$ and simplifying Eq.~(\ref{FrobOp}), we obtain:
\begin{equation*}
	q_{\mu} - 2\absolutevalue{V_{\mu k}}\sqrt{q_{\mu}p_k\lambda_1 ^{(k)}} + \absolutevalue{V_{\mu k}}^2 p_k\,\leqslant\,\sum_{a\ne k}\,\absolutevalue{V_{\mu a}}^2 p_a.
\end{equation*}
Summing the last inequality over $\mu$, using Eq.~(\ref{ColAbs}) and the fact that $p_a$ and $q_{\mu}$ -- ensemble probabilities, we establish the following inequality:
\begin{equation}\label{SchIneq}
	\lambda_1^{(k)}	\,\geqslant\, p_k\,\left(\sum_{\mu}\absolutevalue{V_{\mu k}}\sqrt{q_{\mu}}\right)^{-2}.
\end{equation}
We emphasize the conditions under  which Eq.~(\ref{SchIneq}) is valid: $\lambda_1^{(k)}$ and $p_k$ are associated with the spectral decomposition of the separable density operator, $V_{\mu a}$ -- a matrix transforming the spectral decomposition of the state to the separable one, $q_{\mu}$ -- ensemble probabilities of the separable decomposition. 

Eq.~(\ref{SchIneq}) expresses a tighter than in Eq.~(\ref{SepCrit1}) lower bound for $\lambda_1^{(k)}$ since, by the Cauchy-Schwarz inequality,
\[
\left(\sum_{\mu}\absolutevalue{V_{\mu k}}\sqrt{q_{\mu}}\right)^2\,\leqslant\,\sum_{\mu}\absolutevalue{V_{\mu k}}^2\,\sum_{\mu}q_{\mu} = 1
.\] 
Therefore, separability criterion~\ref{sep1} is a consequence of Eq.~(\ref{SchIneq}).

We can also apply Eq.~(\ref{Frob}) to an arbitrary density operator $\rho\lsc{AB}$, not necessarily a separable one, to obtain inequalities relating ensemble probabilities $q_{\mu}$ and eigenvalues $\tilde\lambda_i ^{(\mu)}$ of some arbitrary ensemble decomposition of $\rho\lsc{AB}$ with probabilities $p_k$ and eigenvalues $\lambda_i ^{(k)}$ of the spectral decomposition. Making similar calculations as above with the use of Eqs.~(\ref{Frob}) and (\ref{TrOrth}), we arrive at inequality
\begin{equation}\label{FrobGen}
	\absolutevalue{V_{\mu k}}^2 p_k\,\leqslant\, \absolutevalue{V_{\mu k}}\,\sum_{i=1}^m \,\sqrt{q_{\mu} p_k \lambda_i ^{(k)} \tilde\lambda_i ^{(\mu)}}.
\end{equation}
In the right part of this inequality $\absolutevalue{V_{\mu k}}$ can be eliminated with the use of Eq.~(\ref{BoundV}), in the left part we have two options of eliminating $\absolutevalue{V_{\mu k}}$: by summing over $\mu$ using Eq.~(\ref{ColAbs}) and by summing over $k$ using the equality $q_{\mu} = \sum_{k}\,\absolutevalue{V_{\mu k}}^2 p_k$. The first choice leads to inequality
\begin{equation}\label{Pineq}
	p_k\,\leqslant\,\sum_{\mu}\, q_{\mu}\,\left(\sum_{i=1}^m\,\sqrt{\lambda_i ^{(k)}\tilde\lambda_i^{(\mu)}}\right)^2,
\end{equation}
for any $k$; the second choice yields:
\begin{equation}\label{Leq}
	1\,\leqslant\,\sum_k\,\left(\sum_{i=1}^m\, \sqrt{\lambda_i ^{(k)}\tilde\lambda_i^{(\mu)}}\right)^2,
\end{equation}
for any $\mu$.

When the density operator $\rho\lsc{AB}$ is separable and $\tilde\lambda_i ^{(\mu)}$ are associated with a separable decomposition of $\rho\lsc{AB}$, Eq.~(\ref{EigenVan}) holds, and Eqs.~(\ref{Pineq}) and (\ref{Leq}) transform to Eqs.~(\ref{SepCrit1}) and (\ref{sepcons}) respectively.

Another spectral criterion can be found by analyzing inequalities for singular values of a sum of several matrices. According to the singular value analogs of the Weyl inequalities for eigenvalues of sums of Hermitian matrices~\cite{BhatWeyl}, there is an upper bound on singular values of a sum of two arbitrary $m\times n$ matrices $A$ and $B$~\cite{BhatSing}:
\begin{equation}
  \label{Weyl}
  \sigma_{i+j-1}(A + B)\leqslant \sigma_i(A) + \sigma_j(B),
\end{equation}
where $1\leqslant i, j \leqslant \mathrm{min}\{m, n\}$ and $i+j\leqslant \mathrm{min}\{m, n\} +1$. Taking $j = 1$, making substitutions $A\to A +B$, $B\to -B$ and using the fact that $\sigma_i(-B) = \sigma_i(B)$, we obtain a lower bound on $\sigma_i(A+B)$:
\begin{equation}
  \label{WeylLow}
  \sigma_i(A + B)\geqslant\sigma_i(A) - \sigma_1(B).
\end{equation}
The generalization of this inequality on an arbitrary number $l$ of matrices reads as follows:
\begin{equation}
  \label{WeylGen}
  \sigma_i(A_1 + \ldots + A_l)\geqslant\sigma_i(A_k) - \sum_{j\ne k}\sigma_1(A_j),
\end{equation}
where $k$ -- any integer between $1$ and $l$. Applying inequality in Eq.~(\ref{WeylGen}) to the sum in the right part of Eq.~(\ref{OpTransf}) and taking into account Eq.~(\ref{ValConn}), we obtain:
\begin{equation}
  \label{WeylCor1}
  \sqrt{q_{\mu}\tilde\lambda^{(\mu)}_i}\geqslant\lvert V_{\mu k}\rvert\sqrt{p_k\lambda_i^{(k)}} - \sum_{a\ne k}\lvert V_{\mu a}\rvert\sqrt{p_a\lambda^{(a)}_1}.
\end{equation}
With the use of Eq.~(\ref{BoundV}) the last inequality can be transformed to the form:
\begin{equation}
  \label{WeylCor2}
  \sqrt{\tilde\lambda^{(\mu)}_i}\geqslant\lvert V_{\mu k}\rvert\sqrt{\lambda_i^{(k)} p_k/q_{\mu}} - \sum_{a\ne k} \sqrt{\lambda^{(a)}_1}\sum_{j=1}^m \sqrt{\lambda_j^{(a)}\tilde\lambda^{(\mu)}_j}.
\end{equation}
When the state $\rho\lsc{AB}$ is separable, by Lemma~\ref{main_lem} there exists a transformation with coefficients $V_{\mu a}$ such that Eq.~(\ref{EigenVan}) holds. In this case, taking $i=2$ and supposing that $\lambda^{(k)}_2\ne 0$ for chosen $k$~\footnote{if $\lambda^{(k)}_2 = 0$ for all $k$ in the ensemble, then the case is trivial, and $\rho\lsc{AB}$ is separable}, we obtain a bound for $\lvert V_{\mu k}\rvert$:
\begin{equation}
  \label{ModVBound}
  \lvert V_{\mu k}\rvert\leqslant\sqrt{q_{\mu}/(p_k\lambda^{(k)}_2)}\:\sum_{a\ne k}\lambda^{(a)}_1.
\end{equation}
Squaring the last equation and then summing it over $\mu$, we have:
\begin{equation}
  \label{SecRawCrit}
  \sum_{\mu}\lvert V_{\mu k}\rvert^2 = 1\; \leqslant\; \frac1{p_k\lambda^{(k)}_2} \left(\sum_{a\ne k}\lambda^{(a)}_1\right)^2.
\end{equation}
The second separability criterion can be formulated.
\begin{sep}\label{sepWeyl}
	Let $\rho_{\scr AB} = \sum_a p_a \ket{\Psi_a}\!\bra{\Psi_a}_{\scr AB}$ express a spectral decomposition of the density operator $\rho\lsc{AB}$ and $\lambda^{(a)}_i$ denote the $i$-th eigenvalue of  $\rho^{(a)}\lsc{A} = \mathrm{Tr}\lsc B \{\;\,\dyad{\Psi_a}\lsc{AB}\}$. If $\rho\lsc{AB}$ is separable, then for all $k$ such that $\lambda^{(k)}_2\ne 0$, the inequality
  \begin{equation}
    \label{WeylCrit}
    \left(\sum_{a\ne k}\lambda^{(a)}_1\right)^2\;\geqslant\;p_k\lambda^{(k)}_2
  \end{equation}
  holds.
\end{sep}
The criterion is applicable only when $\sum_{a\ne k}\lambda^{(a)}_1< 1$; otherwise, the inequality in Eq.~(\ref{WeylCrit}) is trivial and doesn't give any information.

Separability criterion~\ref{sepWeyl} is able to detect entanglement in some cases where separability criterion~\ref{sep1} can't. Let us suppose that inequality in Eq.~(\ref{SepCrit1}) is not violated, i.~e., $\lambda_1^{(k)}\geqslant p_k$ for all $k$. It is easy to see that separability criterion~\ref{sepWeyl} detects entanglement when for some $k$ the following conditions are satisfied:
\begin{subequations}
	\begin{align}
	1-p_k\geqslant\lambda_2^{(k)}&>\frac {1}{p_k}\left(\sum_{a\ne k}\lambda_1^{(a)}\right)^2\geqslant\frac {(1-p_k)^2}{p_k},\label{crita}\\
	 &p_k > 1/2.\label{critb}
	\end{align}
\end{subequations}
The first inequality in Eq.~(\ref{crita}) is due to the fact that $1 - \lambda_2^{(k)}\geqslant\lambda_1^{(k)}\geqslant p_k$; the last inequality in Eq.~(\ref{crita}) holds since
\[
\sum_{a\ne k}\lambda_1^{(a)}\geqslant\sum_{a\ne k}p_a=1-p_k
.\] 
Eq.~(\ref{critb}) is written  to guarantee that $1-p_k > (1-p_k)^2/p_k$.
\bigskip

As a simple example, we consider a $5\otimes 5$ state
\[
\rho = \frac23 \dyad{\psi_1} + \frac13 \dyad{\psi_2}
,\] 
where $\ket{\psi_1}=\sqrt{\frac23}\ket{00} + \sqrt{\frac13}\ket{11}$ and $\ket{\psi_2}=\sqrt{\frac13}(\ket{22} + \ket{33} + \ket{44})$. In this case
\[
	\lambda_2^{(1)} =\lambda_2 (\dyad{\psi_1})=\frac13 \:>\:  \frac1{p_1}\left(\sum_{a\ne 1}\lambda_1 ^{(a)}\right)^2 = \frac {\left(\lambda_1 ^{(2)}\right)^2}{p_1} = \frac{(1/3)^2}{2/3} = \frac16
,\] 
and, while none of the inequalities of separability criterion~\ref{sep1} is violated, inequality of Eq.~(\ref{WeylCrit}) doesn't hold for $\lambda_2 ^{(1)}$, so separability criterion~\ref{sepWeyl} says that $\rho$ is entangled.

\subsection{\label{sec3:sub2}Direct application to concrete states}

Lemma~\ref{main_lem} states that a bipartite density operator $\rho\lsc{AB}$ is separable if and only if the corresponding completely positive map can be represented by unit rank operators $\{N_{\mu}\}$. Therefore, given \emph{any} ensemble decomposition of $\rho\lsc{AB}$~(not necessarily a spectral one) and a corresponding set \{$M_a\}$ of initial operators representing the CP map, by virtue of Eq.~(\ref{OpTransf}) one can search for linear combinations of $M_a$ with coefficients $V_{\mu a}$ that produce unit rank operators. To do this, one can set the determinants of all second order minors of $N_{\mu}$ to null and analyze the resulting system of polynomial equations in variables $V_{\mu a}$. $\rho\lsc{AB}$ is separable if and only if there exist solutions $V_{\mu a}$ which form a matrix with orthonormal columns.

We consider several known examples in which separability was earlier analyzed by other methods.

\medskip
\begin{example}\label{exmix}
	A mixture of two $2\otimes 2$ maximally entangled density operators
\begin{equation}\label{mix}
	\rho = p \dyad*{\psi^+} + (1-p)\dyad*{\phi^-},
\end{equation}
where $\ket{\psi^+} = \frac1{\sqrt 2}(\ket{01} + \ket{10})$ and $\ket{\phi^-} =\frac1{\sqrt 2}(\ket{00} - \ket{11})$.
\end{example}

\noindent According to Eq.~(\ref{OpElAct}), the CP map corresponding to $\rho$ can be represented by two operators  $M_{\psi^+}$ and $M_{\phi^-}$ related to the states $\ket{\psi^+}$ and  $\ket{\phi^-}$.  Their matrix representations are obtained by substituting the elements of the computational basis for $\ket{\varphi}$ and the states $\ket{\psi^+}$ and  $\ket{\phi^-}$ for $\ket{\Psi_a}$ in Eq.~(\ref{OpElAct}):
\begin{equation}
	M_{\psi^+} = \sqrt p \begin{pmatrix}0 & 1\\ 1 & 0\end{pmatrix},\: M_{\phi^-} = \sqrt{1-p} \begin{pmatrix}1 & 0\\0 & -1\end{pmatrix}.
\end{equation}
Eq.~(\ref{OpTransf}) gives the expression for $N_{\mu}$ in terms of $V_{\mu a}$:
\begin{equation}\label{nexp}
	N_{\mu} = \begin{pmatrix}\sqrt{1-p}\,V_{\mu 2} & \sqrt p \,V_{\mu 1}\\ \sqrt p \,V_{\mu 1} & -\sqrt{1-p}\,V_{\mu 2}\end{pmatrix}.
\end{equation}
Operator $N_{\mu}$ is of unit rank when the determinant of its matrix vanishes:
\begin{equation}\label{detvan1}
	(1-p)V_{\mu 2}^2 = - p V_{\mu 1}^2 .
\end{equation}
Taking absolute value of both parts of Eq.~(\ref{detvan1}), one can see that it is consistent with the normalization condition of Eq.~(\ref{ColAbs}) when $1-p = p$. Therefore, \emph{the density operator in Eq.~(\ref{mix}) is entangled when $p \ne 1/2$}. 
When $p=1/2$, an example of a unitary matrix $V$ satisfying $V_{\mu 2}^2 = - V_{\mu 1}^2$ can easily be found:
\begin{equation}\label{vmat}
	V = \frac1{\sqrt 2}\begin{pmatrix}1 & i\\ 1 & -i\end{pmatrix}.
\end{equation}
	Therefore, \emph{when $p = 1/2$,  $\rho$ is separable}. The proposed method also gives a separable decomposition in this case. Substituting the entries from the first row~($\mu = 1$) of the matrix of Eq.~(\ref{vmat}) into Eq.~(\ref{nexp}), we obtain $N_1 = \frac1{2}\bigl(\begin{smallmatrix}i & 1\\ 1 & -i\end{smallmatrix}\bigr)$, and, by Eqs.~(\ref{OpDenCon}), (\ref{OpDenConB})~(with $N_1$ instead of $M_a$), the reductions
\begin{equation}
	\rho^{(1)}_{\scr A} =\rho^{(1)}_{\scr B} = \frac1{2}\begin{pmatrix}1 & i\\ -i & 1\end{pmatrix}
\end{equation}
along with the ensemble probability $q_1 = \frac1m\mathrm{Tr}\{N_1 ^{\dagger} N_1\} = 1/2$.

\noindent In a similar way, it can be obtained that
\begin{equation}
	\rho^{(2)}_{\scr A} =\rho^{(2)}_{\scr B} = \frac1{2}\begin{pmatrix}1 & -i\\ i & 1\end{pmatrix},\, q_2 = 1/2,
\end{equation}
and so the separable decomposition~(one of the many possible) of $\rho$ is:
\begin{equation}
	\rho = \frac12 \rho^{(1)}_{\scr A} \otimes\rho^{(1)}_{\scr B} + \frac12 \rho^{(2)}_{\scr A}\otimes\rho^{(2)}_{\scr B}.  
\end{equation}

\medskip
\begin{example}\label{expure}
A mixture of a maximally entangled density operator and a pure one
\begin{equation}
	  \label{mixclear}
	    \rho = p\dyad*{\psi^+} + (1-p)\dyad{00},
    \end{equation}
    where $\ket{\psi^+} = \tfrac1{\sqrt 2}(\ket{01} + \ket{10})$.
\end{example}
\noindent In this case the CP map is represented by two operators $M_{\psi^+}$ and $M_{00}$ :
\begin{equation}
	M_{\psi^+} = \sqrt p \begin{pmatrix}0 & 1\\ 1 & 0\end{pmatrix},\, M_{00} = \sqrt{2(1-p)} \begin{pmatrix}1 & 0\\ 0 & 0\end{pmatrix}.
\end{equation}
Operators $N_{\mu}$, their linear combinations, are given by
\begin{equation}
	N_{\mu} = \begin{pmatrix}\sqrt{2(1-p)}\,V_{\mu 2} & \sqrt p \,V_{\mu 1}\\ \sqrt p \,V_{\mu 1} & 0\end{pmatrix}.
\end{equation}
Again, $N_{\mu}$ will be of unit rank when the determinant of its matrix vanishes:
\begin{equation}
	\mathrm{det}\, N_{\mu} = -p V_{\mu 1}^2 = 0.
\end{equation}
The coefficients $V_{\mu 1}$ cannot be equal to null simultaneously due to Eq.~(\ref{ColAbs}), so \emph{the only possibility for $\rho$ to be separable is  $p = 0$. When $p\ne 0$,  $\rho$ is entangled}.

As Examples~\ref{exmix} and~\ref{expure} show, when the number of terms in the ensemble decomposition of a given density operator is small, the application of Lemma~\ref{main_lem} is quite easy. The analysis of the next example will take much more effort.

\medskip
\begin{example}\label{exiso}
	A qubit-qubit isotropic state $\rho^{iso}$.
\end{example}
\noindent Ensemble decomposition of $\rho^{iso}$, a $2\otimes 2$ density operator, is given by Eq.~(\ref{IsoF}) with $d = 2$ and $1/4\leqslant F\leqslant 1$. It consists of 4 terms determined by the states $\ket{\Phi_0^+},\,\ket{\Phi_1^+},\,\ket{\Psi_{01}^+},\,\ket{\Psi_{01}^-}$. Let $M_1, \,M_2, \,M_3, \,M_4$ denote the respective operators connected to these states by Eq.~(\ref{OpElAct}). Their matrix representations are:
\begin{equation}
\begin{split}
	M_1 = \sqrt F &\begin{pmatrix}1 & 0\\0 & 1\end{pmatrix},\, M_2 = \sqrt{\tfrac{1-F}3}\begin{pmatrix}1 & 0\\0 & -1\end{pmatrix},\\
	M_3 = \sqrt{\tfrac{1-F}3} &\begin{pmatrix}0 & 1\\1 & 0\end{pmatrix},\, M_4 = \sqrt{\tfrac{1-F}3}\begin{pmatrix}0 & -1\\1 & 0\end{pmatrix}.
\end{split}
\end{equation}
Operators $\{M_i\}$ give the representation of a CP map corresponding to  $\rho^{iso}$. Their transformations, $\{N_{\mu}\}$, are given by
\begin{equation}\label{isotransf}
	N_{\mu} = \begin{pmatrix}\sqrt F\, V_{\mu 1} + \sqrt{\tfrac{1-F}3}\, V_{\mu 2} & \sqrt{\tfrac{1-F}3}\, (V_{\mu 3} - V_{\mu 4})\\
		\sqrt{\tfrac{1-F}3}\, (V_{\mu 3} + V_{\mu 4}) & \sqrt F\, V_{\mu 1} - \sqrt{\tfrac{1-F}3}\, V_{\mu 2} 
	\end{pmatrix}.
\end{equation}
The rank of $N_{\mu}$ is $1$ when the determinant vanishes:
 \begin{equation}\label{detiso}
	F\, V_{\mu 1}^2 = \frac{1-F}3 \,(V_{\mu 2}^2 + V_{\mu 3}^2 - V^2_{\mu 4}).
\end{equation}
Taking absolute value of both parts of this equation gives inequality
\begin{equation}
	F\, \lvert V_{\mu 1}\rvert^2 \leqslant \frac{1-F}3 \,(\lvert V_{\mu 2}\rvert ^2 + \lvert V_{\mu 3}\rvert ^2 + \lvert V_{\mu 4}\rvert ^2),
\end{equation}
which, after summing over $\mu$ and using Eq.~(\ref{ColAbs}), takes form:
\begin{equation}
	F \leqslant \,\frac{1-F}3\, 3 = 1 - F.
\end{equation}
Consequently, \emph{when $F > 1/2$, a two-qubit isotropic state is entangled}. 

For a complete analysis, it is necessary to show that if  $1/4\leqslant F\leqslant 1/2$, then $\rho^{iso}$ is separable. Finding a matrix satisfying Eq.~(\ref{detiso}) along with the orthonormal condition for columns is not an easy task if one tries to approach the problem by straightforward solving a system of polynomial equations. We will use a method for construction of unitary matrices described in \cite{UnitConstruct}:
\begin{state}
	Let $A_1,A_2,\ldots,A_n$ be $m\times m$ unitary matrices and let
	\[ \left ( a_{ij} \right )_{i,j=1}^n,\mbox{ be a unitary matrix.}\]
	Then the following matrix is a $nm\times nm$ unitary matrix:
	\[ B=\begin{bmatrix}
		a_{11}A_1& a_{12}A_2 & \cdots &a_{1n}A_n\\
		a_{21}A_1&a_{22}A_2& \cdots &a_{2n}A_n\\
		\vdots & \vdots& \vdots& \vdots\\
		a_{n1}A_1&a_{2n}A_2 & \cdots & a_{nn}A_n
	\end{bmatrix}\]

\end{state}

\medskip
With the use of this statement we will show that for any $F\colon\, 1/4\leqslant F\leqslant 1/2$, a unitary matrix with entries $V_{\mu a}$ satisfying Eq.~(\ref{detiso}) can be constructed.

At first, we consider some particular case of Eq.~(\ref{detiso}): for example, it could be 
\begin{equation}\label{2det}
	2\, V_{\mu 1}^2 =V_{\mu 2}^2 + V_{\mu 3}^2 - V^2_{\mu 4}.
\end{equation}
This choice corresponds to $F = 2/5$. Let $U,\, W$ and  $\bigl(\begin{smallmatrix}a & b\\c & d\end{smallmatrix}\bigr)$ --- $2\times 2$ unitary matrices.  We construct our solution, the $4\times 4$ unitary matrix $V$, in accordance with the above statement: 
\begin{equation}
	V = \begin{pmatrix}a\,U & b\, W\\ c\, U & d\, W\end{pmatrix}.
\end{equation}
	The key is to keep all the coefficients as simple as possible. We can choose $a,\, b,\, c,\, d$ to have the same absolute values:  $\bigl(\begin{smallmatrix}a & b\\c & d\end{smallmatrix}\bigr) = \tfrac1{\sqrt 2}\bigl(\begin{smallmatrix}1 & i\\1 & -i\end{smallmatrix}\bigr)$.
Substituting the ansatz for $V$ into Eq.~(\ref{2det}), we obtain the following equations for the entries of $U$ and  $W$:
 \begin{equation}\label{sysiso}
	\begin{split}
	       	&(2U_{11}^2 - U_{12}^2) - (W_{12}^2 - W_{11}^2) = 0,\\
		&(2U_{21}^2 - U_{22}^2) - (W_{22}^2 - W_{21}^2) = 0.
	\end{split}
\end{equation}
If we choose $U$ in the simplest form:  $U = \tfrac1{\sqrt 2}\bigl(\begin{smallmatrix}1 & -1\\1 & 1\end{smallmatrix}\bigr)$, then the matrix $W$ satisfying Eq.~(\ref{sysiso}) can also be easily guessed:
$W = \frac12\bigl(\begin{smallmatrix}i\sqrt 3 & -i\\1 & \sqrt 3\end{smallmatrix}\bigr)$. The solution for $F = 2/5$ then will be:
\begin{equation}
	V = \frac12 \begin{pmatrix}
		1 & -1 & -\sqrt{\frac32} & \frac1{\sqrt 2}\\
		1 &  1 & \frac{i}{\sqrt 2} & \sqrt{\frac32}i \\
		1 & -1 & \sqrt{\frac32} &  -\frac1{\sqrt 2}\\
		1 &  1 & -\frac{i}{\sqrt 2} & -\sqrt{\frac32}i 
	\end{pmatrix}.
\end{equation}
We can see a specific pattern here in V:
\begin{equation}\label{isopat}
	V = \frac12 \begin{pmatrix}
		1 & -1 & -\alpha & \beta\\
		1 &  1 & i\beta & i\alpha\\
		1 & -1 & \alpha &  -\beta\\
		1 &  1 & -i\beta & -i\alpha 
	\end{pmatrix},
	\end{equation}
	where $\alpha,\, \beta$ -- some positive numbers satisfying the orthonormal condition for columns~(and rows) of $V$:  $\alpha^2 + \beta^2 = 2$. This pattern works for the general solution, for any $F,\, 1/4\leqslant F\leqslant 1/2$: substituting $V$ from Eq.~(\ref{isopat}) into Eq.~(\ref{detiso}), we can easily solve the resulting equations in variables $\alpha$ and  $\beta$. The solution is as follows:
\begin{equation}\label{geniso}
	V =\frac12 \begin{pmatrix}
		 1 & -1 & -  \sqrt{\frac{2 F+1}{2-2 F}} &   \sqrt{\frac{3-6 F}{2-2 F}} \\
		  1 & 1 &   i \sqrt{\frac{3-6 F}{2-2 F}} &   i \sqrt{\frac{2 F+1}{2-2 F}} \\
		   1 & -1 &   \sqrt{\frac{2 F+1}{2-2 F}} & -  \sqrt{\frac{3-6 F}{2-2 F}} \\
		    1 & 1 & -  i \sqrt{\frac{3-6 F}{2-2 F}} & -  i \sqrt{\frac{2 F+1}{2-2 F}} 
	    \end{pmatrix}.
\end{equation}
A unitary matrix satisfying Eq.~(\ref{detiso}) is found, and  we obtain that \emph{if $1/4\leqslant F\leqslant 1/2$, then $\rho^{iso}$ is separable}.

With the help of Eq.~(\ref{geniso}) a separable decomposition of $\rho^{iso}$ can be obtained. First, substituting $V_{\mu a}$ from Eq.~(\ref{geniso}) into Eq.~(\ref{isotransf}), we obtain the expressions for 4 unit rank operators $\{N_{\mu}\}$. Next, direct calculations with the use of Eqs.~(\ref{OpDenCon}),~(\ref{OpDenConB}), and the expression for the ensemble probabilities $q_{\mu}~=~\frac1m\mathrm{Tr}\{N_{\mu} ^{\dagger} N_{\mu}\}$ show that when $1/4\leqslant F\leqslant 1/2$,  $\rho^{iso}$ can be decomposed as:
\begin{equation}
	\rho^{iso} = \frac14 (\rho^{(1)}\lsc A\otimes\rho^{(1)}\lsc B + \rho^{(2)}\lsc A\otimes\rho^{(2)}\lsc B + \rho^{(3)}\lsc A\otimes\rho^{(3)}\lsc B + \rho^{(4)}\lsc A\otimes\rho^{(4)}\lsc B),
\end{equation}
where $\rho^{(i)}\lsc A = \dyad*{\psi^{(i)}\lsc A}$, $\rho^{(i)}\lsc B = \dyad*{\psi^{(i)}\lsc B}$ -- projectors on pure states
\begin{equation}
	\begin{array}{ll}
		\ket*{\psi^{(1)}\lsc A} = a\ket*{0}\lsc A - b\ket*{1}\lsc A,& \ket*{\psi^{(1)}\lsc B} = c\ket*{0}\lsc B - d\ket*{1}\lsc B,\\
		\ket*{\psi^{(2)}\lsc A} = b\ket*{0}\lsc A - ia\ket*{1}\lsc A,& \ket*{\psi^{(2)}\lsc B} = d\ket*{0}\lsc B + ic\ket*{1}\lsc B,\\
		\ket*{\psi^{(3)}\lsc A} = a\ket*{0}\lsc A + b\ket*{1}\lsc A,& \ket*{\psi^{(3)}\lsc B} = c\ket*{0}\lsc B + d\ket*{1}\lsc B,\\
		\ket*{\psi^{(4)}\lsc A} = b\ket*{0}\lsc A + ia\ket*{1}\lsc A,& \ket*{\psi^{(4)}\lsc B} = d\ket*{0}\lsc B - ic\ket*{1}\lsc B,
	\end{array}
\end{equation}
with
\begin{equation}
	\begin{array}{c}

	a = \sqrt{\left(3 -\sqrt{3-12 F^2}-2 \sqrt{3} \sqrt{(1-F) F}\right)/6},\\
	b = \sqrt{\left(\sqrt{3-12 F^2}+2 \sqrt{3} \sqrt{(1-F) F}+3\right)/6},\\
	c = \sqrt{\left(\sqrt{3-12 F^2}-2 \sqrt{3} \sqrt{(1-F) F}+3\right)/6},\\
	d = \sqrt{\left(3 -\sqrt{3-12 F^2}+2 \sqrt{3} \sqrt{(1-F) F}\right)/6}.

	\end{array}
\end{equation}

\bigskip
Being $2\otimes 2$ states, the density operators from examples~\ref{exmix},~\ref{expure},~\ref{exiso} could be analyzed with the use of the Peres-Horodecki criterion~(when the separable decomposition is not needed). In the next example we consider a $3\otimes 3$ state having a positive partial transpose.
\begin{example}
	Detecting entanglement of a $3\otimes 3$ PPT state.
\end{example}
Let $\rho$ be a  $3\otimes 3$ density operator constructed with the use of the unextendible product bases~(UPB) method~\cite{UPB,ExPPT}:
\begin{equation}
	\rho=\frac{1}{4}\left(I_3\otimes I_3-\sum_{i=1}^4|{\psi}_i\rangle\langle{\psi}_i|-|{S}\rangle\langle {S}|\right),
\end{equation}
where the states $\ket{\psi_i}$ and $\ket{S}$ are given by
\begin{equation}\label{UPB-3}
	\begin{array}{c}
		|\psi_{1}\rangle=\frac1{\sqrt 2}\,|0\rangle|0 - 1\rangle,~~
		|\psi_{2}\rangle=\frac1{\sqrt 2}\,|2\rangle|1 - 2\rangle,\\[1ex]
                                                
		|\psi_{3}\rangle=\frac1{\sqrt 2}\,|0 - 1\rangle|2\rangle,~~
		|\psi_{4}\rangle=\frac1{\sqrt 2}\,|1 - 2\rangle|0\rangle,\\[1ex]

		|S\rangle=\frac13\,|0+1+2\rangle|0+1+2\rangle.\\
	\end{array}
\end{equation}
The density operator $\rho$ has a positive partial transpose, but in a $3\otimes 3$ case the Peres-Horodecki criterion is not sufficient for the separability of the state.

One possible ensemble decomposition of $\rho$ can be given as $\rho=1/4\sum_{i=1}^4|\phi_i\rangle\langle\phi_i|$ \cite{ExPPT}, where
\begin{equation}\label{UPB-3-ent}
	\begin{array}{l}
		|\phi_{1}\rangle =\frac12\,(|\psi_{5}\rangle+|\psi_{6}\rangle-|\psi_{7}\rangle-|\psi_{8}\rangle),\\[1ex]
		|\phi_{2}\rangle =\frac12\,(|\psi_{5}\rangle-|\psi_{6}\rangle+|\psi_{7}\rangle-|\psi_{8}\rangle),\\[1ex]
		|\phi_{3}\rangle =\frac12\,(|\psi_{5}\rangle-|\psi_{6}\rangle-|\psi_{7}\rangle+|\psi_{8}\rangle),\\[1ex]
		|\phi_{4}\rangle =\frac16\,(|\psi_{5}\rangle+|\psi_{6}\rangle+|\psi_{7}\rangle+|\psi_{8}\rangle)-\frac{2\sqrt 2}{3}\,|\psi_{9}\rangle,
	\end{array}
\end{equation}
and five product states
\begin{equation}\label{UPB-3-C-onb}
	\begin{array}{c}
		|\psi_{5}\rangle=\frac1{\sqrt 2}\,|0\rangle|0 + 1\rangle,~~
		|\psi_{6}\rangle=\frac1{\sqrt 2}\,|2\rangle|1 + 2\rangle,\\[1ex]
                                               
		|\psi_{7}\rangle=\frac1{\sqrt 2}\,|0 + 1\rangle|2\rangle,~~
		|\psi_{8}\rangle=\frac1{\sqrt 2}\,|1 + 2\rangle|0\rangle,\\[1ex]

		|\psi_{9}\rangle=|1\rangle|1\rangle\\[1ex]
	\end{array}
\end{equation}
together with $\ket{\psi_1},\, \ket{\psi_2},\, \ket{\psi_3},\, \ket{\psi_4}$ from Eq.~(\ref{UPB-3}) form a complete orthogonal product basis.
Given the ensemble decomposition, we obtain the operator representation of a CP map corresponding to~ the density operator~$\rho$:
 \begin{equation}
	\begin{split}
		M_{\phi_1} = \frac{\sqrt 3}{4\sqrt 2}\begin{pmatrix}1 & -1 & -1\\ 1 & 0 & 1\\-1 & -1 & 1\end{pmatrix}\!,\quad & M_{\phi_2}=\frac{\sqrt 3}{4\sqrt 2}\begin{pmatrix}1 & -1 & -1\\ 1 & 0 & -1\\1 & 1 & -1\end{pmatrix},\\
 		M_{\phi_3} =  \frac{\sqrt 3}{4\sqrt 2}\begin{pmatrix}1 & 1 & 1\\ 1 & 0 & -1\\-1 & -1 & -1\end{pmatrix}\!,\quad & M_{\phi_4} = \frac1{4\sqrt 6}\begin{pmatrix}
		1 & 1 &1\\ 1 & -8 & 1\\ 1 & 1 & 1	
 		\end{pmatrix}.
	\end{split}
\end{equation}
By Eq.~(\ref{OpTransf}), operators $N_{\mu}$ are expressed as follows:
\begin{small}
	\begin{equation}\label{pptn}
		N_{\mu} = \frac1{4\sqrt 6}\begin{pmatrix}3(V_{\mu 1} + V_{\mu 2} + V_{\mu 3}) + V_{\mu 4} \:& \vm 4 + 3(\vm 3 - \vm 2 - \vm 1) \:& \vm 4 + 3(\vm 3 - \vm 1 - \vm 2)\\
			\vm 4 + 3(\vm 1 + \vm 2 + \vm 3) \:& -8\vm 4 \:& \vm 4 + 3(\vm 1 - \vm 2 - \vm 3)\\
			\vm 4 + 3(\vm 2 - \vm 1 - \vm 3) \:& \vm 4 + 3(\vm 2 - \vm 1 - \vm 3) \:& \vm 4 + 3(\vm 1 - \vm 2 - \vm 3)
		\end{pmatrix}.
\end{equation}
\end{small}
The matrix in Eq.~(\ref{pptn}) is of rank 1 when the determinants of all its second order minors vanish. We obtain a system of 9 equations:
\begin{small}
\begin{subequations}
	\begin{align}
	&-3 {{\vm4}^{2}}-10 \vm3\, \vm4-8 \vm2\, \vm4-8 \vm1\, \vm4-3 {{\vm3}^{2}}+3 {{\vm2}^{2}}+6 \vm1\, \vm2+3 {{\vm1}^{2}} \!\!\!\!&= &0,\label{a}\\
	&3 {{\vm4}^{2}}+8 \vm3\, \vm4-10 \vm2\, \vm4-8 \vm1\, \vm4-3 {{\vm3}^{2}}+6 \vm1\, \vm3+3 {{\vm2}^{2}}-3 {{\vm1}^{2}} \!\!\!\!&= &0,\label{bb}\\
	&-\vm3\, \vm4+\vm1\, \vm4-3 {{\vm3}^{2}}-3 \vm2\, \vm3+3 \vm1\, \vm2+3 {{\vm1}^{2}} \!\!\!\!&= &0,\label{c}\\
	&3 {{\vm4}^{2}}-8 \vm3\, \vm4+10 \vm2\, \vm4-8 \vm1\, \vm4-3 {{\vm3}^{2}}-6 \vm1\, \vm3+3 {{\vm2}^{2}}-3 {{\vm1}^{2}} \!\!\!\!&= &0,\label{d}\\
	&-3 {{\vm4}^{2}}+10 \vm3\, \vm4+8 \vm2\, \vm4-8 \vm1\, \vm4-3 {{\vm3}^{2}}+3 {{\vm2}^{2}}-6 \vm1\, \vm2+3 {{\vm1}^{2}} \!\!\!\!&= &0,\label{e}\\
	&\vm3\, \vm4+\vm1\, \vm4-3 {{\vm3}^{2}}-3 \vm2\, \vm3-3 \vm1\, \vm2+3 {{\vm1}^{2}} \!\!\!\!&= &0,\label{f}\\
	&\vm2\, \vm4+\vm1\, \vm4-3 \vm2\, \vm3-3 \vm1\, \vm3+3 {{\vm2}^{2}}-3 {{\vm1}^{2}} \!\!\!\!&= &0,\label{g}\\
	&-\vm2\, \vm4+\vm1\, \vm4-3 \vm2\, \vm3+3 \vm1\, \vm3+3 {{\vm2}^{2}}-3 {{\vm1}^{2}} \!\!\!\!&= &0,\label{h}\\
	&\vm1\, \vm4-3 \vm2\, \vm3 \!\!\!\!&= &0\label{i}.
	\end{align}
\end{subequations}
\end{small}
As it turns out, the system can be easily analyzed.
Adding Eqs.~(\ref{g}), (\ref{h}) and also Eqs.~(\ref{c}), (\ref{f}), with the use of Eq.~(\ref{i}) we have:
\begin{equation}\label{eqv}
	\vm 2^2 = \vm 1^2,\: \vm 3^2 = \vm 1^2.
\end{equation}
Addition of Eqs.~(\ref{bb}), (\ref{d}) yields:
\begin{equation}
	3\vm 4^2 - 8\vm 1\vm 4 - 3\vm 3^2 = 0,
\end{equation}
whereas addition of Eqs.~(\ref{a}), (\ref{e})  gives:
\begin{equation}
	-3\vm 4^2 - 8 \vm 1\vm 4 + 3\vm 1^2 = 0.
\end{equation}
From the last three equations it follows that $\vm 1\vm 4 = 0$, which in combination with Eq.~(\ref{i}) and Eq.~(\ref{eqv}) gives that $\vm 1^2 = 0$, i.~e., the system has only trivial solution. This contradicts with the normalization condition, Eq.~(\ref{ColAbs}). Consequently, by Lemma~\ref{main_lem}, \emph{$\rho$ is entangled}.

\section{\label{sec4}Another point of view: factorization of the density operator}

One can look at the transformations of the operators $\{M_{a}\}$ in Eq.~(\ref{OpTransf}) from another perspective. Let the $m\times n$ matrix $c^{(a)}$ define the $a$-th term of the ensemble decomposition of a given density operator  $\rho$, as in Eq.~(\ref{PureDec}). The $mn\times mn$ matrix of the operator $\rho$ is given by
 \begin{equation}\label{rhodec}
	\rho_{kl,\,mn} = \sum_{a=1}^d\, p_a\, c^{(a)}_{kl}\,(c^{(a)}_{mn})^*,
\end{equation}
where  $d$ -- the number of the terms in the ensemble decomposition of  $\rho$.

Let $\vc A$ denote the \emph{vectorization} of an $m\times n$ matrix  $A$, the  $mn\times 1$ column vector obtained by stacking the rows of the matrix $A$ on top of one another: 
\begin{equation}
\vc A = [a_{11},\ldots,\, a_{1n}, a_{21},\ldots,\,a_{2n},\ldots,\,a_{m 1},\ldots,\,a_{mn}]^T.
\end{equation}
We will refer to the inverse to $\vc A$ operation as \emph{matricization}.

Let $\m$ denote the matrix whose columns are proportional to the vectorizations of $c^{(a)}$: 
\begin{equation}
	\m = \left(\sqrt{p_1}\,\vc {c^{(1)}},\ldots,\, \sqrt{p_d}\,\vc {c^{(d)}}\right).
\end{equation}
The matrix $\m$ has  $mn$ rows and  $d$ columns. Let  $\m_{kl,\,a}$ - the entry of $\m$ on the intersection of the $kl$ row and the $a$-th column. Eq.~(\ref{rhodec}) can be transformed to the matrix factorization of $\rho$:
\begin{equation}
	\rho_{kl,\,mn} = \sum_{a=1}^d\,\m_{kl,\,a}\left(\m_{mn,\,a}\right)^* = \left(\m\,\m^{\dagger}\right)_{kl,\,mn}.
\end{equation}
In other words,
\begin{equation}\label{rhofac}
\rho = \m\,\m^{\dagger},
\end{equation}
and the operator $\rho$ is positive semi-definite, as it should be.

One can multiply $\m$ by an arbitrary unitary matrix:  $\m\to \tilde\m = \m V$ and obtain a new factorization of $\rho$:  $\rho = \tilde\m\,\tilde\m^{\dagger}$, which corresponds to some new ensemble decomposition of the density operator. If the size of $V$ exceeds the number of columns of  $\m$, then $\m$ can be complemented with zero columns up to the appropriate size. The transition $\m\to \tilde\m = \m V$ is none other than the transformation of the operators $\{M_a\}$ in Eq.~(\ref{OpTransf}). One can then matricize the columns of $\tilde\m$ to obtain the new matrices $\tilde c^{(\mu)}$ as well as the new ensemble probabilities $q_{\mu}$. If all $\tilde c^{(\mu)}$ are of unit rank, then $\rho$ is separable, and its separable decomposition is found. Lemma~\ref{main_lem} can be reformulated as follows:

\begin{dotless}
	A density operator $\rho$ is separable if and only if there exists a factorization of its matrix in the form of Eq.~(\ref{rhofac}), where $\m$ - some (in general, rectangular) matrix whose columns are the vectorizations of unit rank matrices.
\end{dotless}

While being hardly more practical than the original lemma, this reformulation can be useful for better understanding of some rank properties of density operators. For example, there is a theorem relating the ranks of a separable density operator and its reductions~\cite{mranks,rusk}:
\begin{unth}
	If a mixed state $\rho\lsc {AB}$ is separable, then 
	\begin{equation}\label{rankd}
		 \rank{\rho\lsc {AB}}\geqslant \max(\rank{\rho\lsc A}, \,\rank{\rho\lsc B}).	
	\end{equation}
\end{unth}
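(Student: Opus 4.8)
The plan is to lean on the factorization reformulation of Lemma~\ref{main_lem} just stated and to reduce the claim to an elementary fact about spans of product vectors. First I would use separability: by that reformulation $\rho\lsc{AB}=\m\,\m^{\dagger}$ with every column of $\m$ a vectorization of a unit rank matrix. Matricizing the columns and factoring each rank one matrix as an outer product, one obtains normalized vectors $\ket{u_{\mu}}\lsc A$, $\ket{v_{\mu}}\lsc B$ and weights $q_{\mu}>0$ with
\[
	\rho\lsc{AB}=\sum_{\mu} q_{\mu}\,\dyad{u_{\mu}}\lsc A\otimes\dyad{v_{\mu}}\lsc B ,
\]
so that separability hands us an explicit ensemble of product vectors $\ket{u_{\mu}}\lsc A\otimes\ket{v_{\mu}}\lsc B$; this is the only place the hypothesis enters.

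Next I would rewrite all three ranks as dimensions of spans. Because the range of a sum of rank one positive operators is the span of the vectors defining them, $\rank{\rho\lsc{AB}}=\dim\operatorname{span}\{\ket{u_{\mu}}\lsc A\otimes\ket{v_{\mu}}\lsc B\}$. Tracing out $B$ gives $\rho\lsc A=\sum_{\mu}q_{\mu}\dyad{u_{\mu}}\lsc A$, whose support is $\operatorname{span}\{\ket{u_{\mu}}\lsc A\}$, hence $\rank{\rho\lsc A}=\dim\operatorname{span}\{\ket{u_{\mu}}\lsc A\}=:r_A$; symmetrically $\rank{\rho\lsc B}=\dim\operatorname{span}\{\ket{v_{\mu}}\lsc B\}$.

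The core step, and the only one with real content, is the inequality $\dim\operatorname{span}\{\ket{u_{\mu}}\otimes\ket{v_{\mu}}\}\geqslant r_A$. To prove it I would select indices $\mu_1,\ldots,\mu_{r_A}$ for which $\{\ket{u_{\mu_k}}\}$ is linearly independent, pick dual functionals $\bra{f_j}$ with $\ip{f_j}{u_{\mu_k}}=\delta_{jk}$, and apply $\bra{f_j}\otimes I\lsc B$ to a hypothetical relation $\sum_k\alpha_k\ket{u_{\mu_k}}\otimes\ket{v_{\mu_k}}=0$; this collapses to $\alpha_j\ket{v_{\mu_j}}=0$, and since $\ket{v_{\mu_j}}\neq 0$ every $\alpha_j$ vanishes. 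Thus these $r_A$ product vectors are linearly independent, giving $\rank{\rho\lsc{AB}}\geqslant r_A=\rank{\rho\lsc A}$.

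Finally, interchanging $A$ and $B$ yields $\rank{\rho\lsc{AB}}\geqslant\rank{\rho\lsc B}$, and the two bounds combine into Eq.~(\ref{rankd}). The main obstacle is conceptual rather than technical: one must recognize that separability, via the reformulated lemma, supplies a bona fide product ensemble, after which the independence argument is routine. As a sanity check I would note that the hypothesis cannot be dropped, since for a maximally entangled pure state $\rank{\rho\lsc{AB}}=1$ while $\rank{\rho\lsc A}=\rank{\rho\lsc B}=\dim\hil\lsc A$, violating the bound.
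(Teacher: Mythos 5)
Your proof is correct, and it shares the paper's overall strategy: both start from the factorization reformulation of Lemma~1 to extract a genuine product ensemble $\rho\lsc{AB}=\sum_{\mu}q_{\mu}\,\dyad{u_{\mu}}\lsc A\otimes\dyad{v_{\mu}}\lsc B$, and both reduce the claim to showing that whenever $r_A$ of the local vectors $\ket{u_{\mu}}\lsc A$ are linearly independent, the corresponding product vectors are linearly independent too. Where you genuinely diverge is in how that independence is established. The paper stays inside the vectorization formalism: it examines the first columns of the rank-one matrices $a^{(i)}=\sqrt{p_i}\,\xi^{(i)}\eta^{(i)\dagger}$, which equal $\sqrt{p_i}\,\eta^{(i)*}_1\,\xi^{(i)}$, and must first apply a local unitary $I\lsc A\otimes U\lsc B$ to ensure $\eta^{(i)}_1\neq 0$ for every $i$ before arguing that linearly independent subcolumns force linearly independent columns of $\mathfrak M$, whence $\rank{\mathfrak M}\geqslant r_A$. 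Your dual-functional step --- applying $\bra{f_j}\otimes I\lsc B$ with $\ip{f_j}{u_{\mu_k}}=\delta_{jk}$ to a putative dependence $\sum_k\alpha_k\ket{u_{\mu_k}}\otimes\ket{v_{\mu_k}}=0$ --- performs the same restriction to subsystem $A$ but with functionals tailored to the chosen independent set, so the rotation trick becomes unnecessary and the argument is coordinate-free. What the paper's version buys is that it never leaves the matrix $\mathfrak M$ whose introduction is the point of Section~IV, so the conclusion reads off as $\rank{\rho\lsc{AB}}=\rank{\mathfrak M\,\mathfrak M^{\dagger}}=\rank{\mathfrak M}\geqslant r_A$; what yours buys is a shorter and slightly more robust core lemma, plus the explicit identification of all three ranks as dimensions of spans, which makes the symmetric bound for $B$ immediate. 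Both arguments are complete and rigorous.
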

Here we give an alternative proof of this statement.
\begin{proof}
Let $\rho\lsc {AB}$ be separable, then, according to the reformulation of Lemma~\ref{main_lem}, $\rho = \m\,\m^{\dagger}$, where $\m$ -- a matrix whose columns are the vectorizations of unit rank matrices. Let $a^{(i)}$ denote these $m\times n$ matrices. Each of them, being of unit rank, can be factorized as 
\begin{equation}
	a^{(i)} =\sqrt{p_i}\, \xi^{(i)}\eta^{(i)\dagger},
\end{equation}
where $\xi^{(i)},\,\eta^{(i)}$ -- column vectors of unit length and dimensions  $m$ and  $n$, respectively~(general factorization of a rank $1$ matrix has unnormalized vectors, but in this case we have extracted the norms of the vectors explicitly in the factor denoted as $\sqrt{p_i}$ and made them unit length vectors). The reductions of $\rho\lsc {AB}$ on subsystems $A$ and $B$ are then expressed as
\begin{equation}\label{redexp}
	\begin{split}
		\rho\lsc A =\sum_i\, a^{(i)}a^{(i)\dagger} &= \sum_i\,p_i\,\xi^{(i)}\xi^{(i)\dagger},\\
		\rho\lsc B =\sum_i\, \left(a^{(i)\dagger}a^{(i)}\right)^T &= \sum_i\,p_i\,\eta^{(i)*}\left(\eta^{(i)*}\right)^{\dagger},
	\end{split}
\end{equation}
while $\rho\lsc {AB}$ itself is given by
 \begin{equation}\label{denexp}
	 \rho\lsc{AB} = \sum_i\,p_i\,\xi^{(i)}\xi^{(i)\dagger}\,\otimes\,\eta^{(i)*}\left(\eta^{(i)*}\right)^{\dagger},
\end{equation}
where $\eta^{(i)*}$ -- the vector obtained from $\eta^{(i)}$ by complex conjugation of its components.

Consider the selection of the first columns of the matrices  $a^{(i)}$. Each such column is given by the vector 
\[
\sqrt{p_i}(\xi^{(i)}_1\eta^{(i)*}_1,\, \xi^{(i)}_2\eta^{(i)*}_1,\,\ldots,\, \xi^{(i)}_m\eta^{(i)*}_1)^T
.\] 
If $\eta^{(i)}_1 = 0$ for some $i$, then the first column of $a^{(i)}$ is a zero vector. We can apply a unitary transformation $\rho\lsc {AB}\,\to\,(I\lsc A\otimes U\lsc B)\rho\lsc {AB}(I\lsc A\otimes U\lsc B)^{\dagger}$ which, as it follows from Eqs.~(\ref{redexp}), (\ref{denexp}), "rotates" vectors $\eta^{(i)*}$ but doesn't change the global and the local ranks of the density operator $\rho\lsc {AB}$. One can always choose the unitary transformation $U\lsc B$ so that the "rotated"vectors $U\lsc B\eta^{(i)*}$ have nonzero first components. Therefore, with no loss of generality we can assume that the first columns of the matrices $a^{(i)}$ are nonzero vectors. Since $\rank{a^{(i)}} = 1$, the rest $n-1$ columns of each matrix  $a^{(i)}$ are proportional to its first one.

If $\rank{\rho\lsc A} = r\lsc A$, then among the set of the first columns of $a^{(i)}$ there are exactly $r\lsc A$ linearly independent vectors. One can easily see that the first column of $a^{(i)}$ is  a subcolumn of the vectorization of $a^{(i)}$. Therefore, the first columns of $a^{(i)}$ are the subcolumns of the columns of $\m$:
 \begin{equation}
	\m = \begin{pmatrix}
		a\hi 1 _{11} & a\hi 2 _{11} & \ldots\ldots\ldots & a\hi d _{11}\\
		\vdots       & \vdots       & \ldots\ldots\ldots & \vdots\\
		a\hi 1 _{21} & a\hi 2 _{21} & \ldots\ldots\ldots & a\hi d _{21}\\
		\vdots       & \vdots       & \ldots\ldots\ldots & \vdots\\
		\vdots       & \vdots       & \ldots\ldots\ldots & \vdots\\
		a\hi 1 _{m1} & a\hi 2 _{m1} & \ldots\ldots\ldots & a\hi d _{m1}\\
		\vdots       & \vdots       & \ldots\ldots\ldots & \vdots
	\end{pmatrix}
\end{equation}
If $\rank{\rho\lsc A} = r\lsc A$, then $r\lsc A$ subcolumns of  $\m$ are linearly independent, and so are the corresponding columns of  $\m$. Consequently, the number of linearly independent columns of $\m$ is at least  $r\lsc A$, and $\rank{\m} \geqslant r\lsc A$.

Using similar considerations with the set of the first rows of the matrices $a^{(i)}$, one obtains that $\rank{\m}\geqslant\rank{\rho\lsc B}$. The inequality in Eq.~(\ref{rankd}) then follows from the connection between $\m$ and  $\rho\lsc {AB}$ :
\[
\rank{\rho\lsc {AB}} = \rank{\m\,\m^{\dagger}} = \rank{\m}
.\] 
\end{proof}
Note that unlike those of Refs.~\cite{mranks,rusk}, our proof doesn't rely on the reduction and the majorization criteria.
\section{\label{sec5}Conclusions}

As it was shown in section~\ref{sec3}\:\:\:, the method developed in the present paper can be applied to a large variety of states. It seems to be quite operational when the number of terms in the ensemble decomposition of a density operator is small, but still in this case it can't be better than the Peres-Horodecki criterion which is known to be necessary and sufficient for low rank density operators~\cite{Horod2}. In the low rank case it can serve as a complementary method which allows to obtain separable decompositions of density matrices. In the high rank case, as examples with isotropic and PPT states showed, our approach is by no means operational, and it demands a detailed analysis of systems of polynomial equations. In some cases these polynomial systems can be reduced to simple ones; besides, computer algebra methods can be applied. The presented approach is also interesting from the theoretical point of view and can be used in derivation of various separability criteria.

The open question is the connection with the Peres-Horodecki criterion  in the qubit-qubit case. If each term in the ensemble decomposition of a given density operator is defined, as in Eq.~(\ref{PureDec}), by $c^{(a)}_{ij}$, a $2\times 2$ matrix in this case, then the transformed operators $\{N_{\mu}\}$ have the following matrix representation:
\begin{equation}
	(N_{\mu})_{ij} = \sqrt m\,\sum_a \, \vm a\, \sqrt{p_a}\, c^{(a)}_{ji},
\end{equation}
and the unit rank condition yields only one equation for each $\mu$ in the qubit-qubit case:
\begin{equation}
\sum_{a,\, b}\,\sqrt{p_a \,p_b}\,V_{\mu a} V_{\mu b}\, (c^{(a)}_{00}c^{(b)}_{11} - c^{(a)}_{01}c^{(b)}_{10}) = 0.	
\end{equation}
For a $2\otimes 2$ state positivity of a partial transpose is sufficient for separability, but it is unclear how this fact implies existence of the matrix $V_{\mu a}$ having orthonormal columns and satisfying the above equation.

\bigskip

\noindent\emph{Acknowledgements} - The author would like to thank G.~G.~Amosov, S.~N.~Filippov, M.~E.~Shirokov, and A.~M.~Chebotarev for inspiring discussions and comments on an earlier version of the manuscript. The author thanks Lomonosov Moscow State University for supporting this work.

\end{document}